\renewcommand{\[}{\begin{equation}}
\renewcommand{\]}{\end{equation}}
\newtheorem*{thm}{Theorem}
\newtheorem*{lemma}{Lemma}
\theoremstyle{definition}
\newtheorem*{Def}{Definition}
\renewcommand{\Psi}{\sigma}
\newcommand{\D}{{D\mkern-11.5mu/\,}} \newcommand{\Q}{\mathbb{H}}
\newcommand{\R}{\mathbb{R}}
\newcommand{\CC}{\mathbb{C}}
\newcommand{\zero}{\phantom{}}
\newcommand{\Z}{\mathbb{Z}}
\newcommand{\id}{\mathsf{id}}
\newcommand{\Cl}{\mathcal{C}\ell}
\begin{document}

\title[Fermion multiplet as internal noncommutative forms]
{On noncommutative geometry\\
  of the Standard Model:\\
  fermion multiplet as internal forms}

\author[L. Dąbrowski]{Ludwik Dąbrowski}

\address{Scuola Internazionale Superiore di Studi Avanzati (SISSA)\\
  via Bonomea 265\\ I-34136 Trieste}

\email{dabrow@sissa.it}

\thanks{This work is part of the project Quantum Dynamics sponsored by
  EU-grant RISE 691246 and Polish Government grant 317281, and was
  partially supported by by the Polish Ministry of Science and Higher
  Education 2015-2019 matching fund 3542/H2020/2016/2.}

\subjclass{Primary 58B34; Secondary 46L87, 81T13}

\keywords{spinors, differential forms, noncommutative geometry;
  spectral triples; Standard Model}

\date{November 11, 2017}
---------------------------------------------------------------------

\begin{abstract}
  We unveil the geometric nature of the multiplet of fundamental
  fermions in the Standard Model of fundamental particles as a
  noncommutative analogue of de\,Rham forms on the internal finite
  quantum space.
\end{abstract}

\maketitle

\section{Introduction} From the conceptual point of view the Standard
Model (S.M.) of fundamental particles and their interactions is a
particular model of $U(1)\times SU(2)\times SU(3)$ gauge fields
(bosons) minimally coupled to matter fields (fermions), plus a Higgs
field (boson). After the second quantization with gauge fixing,
spontaneous symmetry breaking mechanism, regularization and
perturbative renormalization it extremely well concords with the
experimental data. Even so (unreasonably) successful it however does
not explain (though somewhat constrains) the list of particles, in
particular the existence of 3 families, contains several parameters
and does not include the fourth known interaction: gravitation, with
its own fundamental symmetry: general relativity or
diffeomorphisms. There have been various attempts to settle some of
the above shortcomings: GUT based on a simple group $SU(5)$ or
$SO(10)$, modern variants of old Kaluza-Klein model with
`compactified' internal dimensions, and others more recent and
fashionable, that are still under extensive massive study.

Of our interest in this note is another distinct approach to the
S.M. in the framework of noncommutative geometry by A.\,Connes et.al.,
see e.g. \cite{CM08}, which is not so widely known among physicists.
It interprets the multiplet of fundamental fermions as a field on a
finite quantum space, on which the would be ``coordinates'', as well
as the algebra of ``functions'' fail to commute. In this note we focus
on its deeper geometric structure, aiming to shed more light on the
nature of this internal quantum space. We shall review the key results
of the two recent papers \cite{dd16,dds17}, explaining in more detail
the classical (commutative) motivation behind them. For that some well
known material in differential geometry will be presented from the
viewpoint of the so called spectral triples; with the only new
contribution in the last part of Subsection~\ref{hdrst} regarding
their KO-dimension.

\section{Introduction}\label{intro} The noncommutative formulation
$\nu$S.M. of the Standard Model takes its cue from its geometry which
in mathematical terminology corresponds to a connection (locally a
multiplet of vector fields) whose structure group is $U(1)\times
SU(2)\times SU(3)$ on (a multiplet) of spinors, together with a
doublet of scalar fields.  Although it does not renounce of {\em groups}, $\nu$S.M. is however based primarily on {\em
algebras}. Moreover to the 75 years-old Gelfand-Naimark
(anti)equivalence:
$$\boxed{
\text{topological\; spaces} \longleftrightarrow \text{commutative }
C^*\text{-algebras}}
$$
\vspace{1mm} and to the Serre-Swan equivalence:
$$
\boxed{ \text{vector bundles} \longleftrightarrow \text{modules}}
$$
it adjoins two other ingredients to encode such structures as {\em
smoothness}, {\em calculus} and (Riemannian) {\em metric} on a space
$M$. The first one is a Hilbert space $H$ that carries a unitary
representation of a (possibly noncommutative) $*$-algebra $A$, and so
obviously also of its norm completed $C^*$-algebra. The second one is
an analogue of the Dirac operator on $H$. Together with a $*$-algebra
$A$ they satisfy certain analytic conditions: $D$ is selfadjoint,
$[D,a]$ are bounded $\forall a\in A$ and $(D-z)^{-1}$ are compact for
$z\in\CC\setminus \R$, so that they form the so called {\em
spectral\,triple}\, (S.T.) \cite{Con96}
$$(A,H,D).$$
Such a S.T. is {\em even} if there is a $\Z_2$-grading $\chi$ of $H$,
$\chi^2=1, \chi^\dagger=\chi$, with which all $a\in A$ commute and $D$
anticommutes. Furthermore it is {\em real} if there is a
$\CC$-antilinear isometric operator $J$ on $H$, such that denoting $B'$ the
commutant of $B\subset {\mathcal B}(H)$,
\begin{equation}\label{eq:0order} JAJ^{-1}\subset A', \vspace{-1mm}
\end{equation} which is often termed {\em order 0 condition}.  We say
that a real S.T. satisfies the {\em order 1 condition} if
\begin{equation}\label{eq:1order} JAJ^{-1} \subset [D,A]'
\end{equation} and the {\em order 2 condition}\footnote{while the
order 1 condition means classically that $D$ is order 1 differential
operator, order 0 and 2 conditions don't have such interpretation} if
\begin{equation}\label{eq:2order} J[D,A]J^{-1} \subset [D,A]'.
\end{equation}

The $A$-bimodule spanned by $[D,A]$ is often called space of 1-forms,
and the algebra generated by $A$ and $[D,A]$ the space of all forms
for the Dirac calculus (with the exterior derivative given by the
commutator with $D$). Motivated by the classical examples (cf. the
next subsections) slightly abusing the terminology we will call, quite
as in \cite{LRV12}, {\em Cliford algebra} the complex $*$-algebra
$\Cl_D(A)$ generated by $A$ and $[D,A]$.

Note that for noncommutative $A$ a priori there is no right action of
$A$ on $H$, but given $J$ there is one:
$$ha:=Ja^*J^{-1} h,$$
that commutes with the left action due to the order 0 condition and so
$H$ becomes an $A$-$A$ bimodule.  Furthermore, if the order 1
condition holds $H$ becomes a $\Cl_D(A)$-$A$ bimodule, and if the 2nd
order condition holds $H$ becomes even a $\Cl_D(A)$-$\Cl_D(A)$
bimodule.

Connes formulated few other important properties of real spectral
tri\-ples. One of them requires that the following identities are
satisfied
\begin{equation}\label{j2} J^2 = \epsilon\, \id,
\end{equation}
\begin{equation}\label{jd} DJ = \epsilon' JD,
\end{equation} and in even case
\begin{equation}\label{jg} \chi J = \epsilon''J\chi,
\end{equation} where the three signs $\epsilon,\epsilon', \epsilon''
\in \{+,-\}$ specify the so called KO-dimension modulo 8: \vspace{1mm}
\begin{equation}
\begin{array}{|c|c|c|c|c|c|c|c|c|} \hline n &0&1&2&3&4&5&6&7 \\ \hline
\epsilon &+&+&-&-&-&-&+&+\\ \hline \epsilon^{\prime}
&+&-&+&+&+&-&+&+\\ \hline \epsilon^{\prime\prime} &+&&-&&+&&-& \\
\hline
\end{array}
\end{equation} (If dimension $n$ is even one can alternatively use
$\chi J$ as a new real structure, which changes the parameter
$\epsilon^{\prime}$ to $-\epsilon^{\prime}$, and $\epsilon$ to
$\epsilon \epsilon^{\prime\prime}$).

\subsection{Canonical spectral triple}\label{cst} A prototype example
is the {\em canonical} S.T.  on a closed oriented \underline{spin}
manifold $M$ of dimension $n$ equipped with a Riemannian metric $g$:
\begin{equation}\label{can} (C^\infty(M), L^2(S),\D ).
\end{equation} Here $C^\infty(M)$ is the algebra of smooth complex
functions on $M$, $S$ is the rank$_\CC = 2^{[n/2]}$ bundle of Dirac
spinors on $M$, whose sections carry a faithful irreducible
representation
\begin{equation}\label{gamma}
\gamma:\Gamma(\Cl(M))\overset{\approx}{\longrightarrow}
\operatorname{End}_{C^\infty(M)} \Gamma(S) \approx
\Gamma(S)\otimes_{C^\infty(M)} \Gamma(S)^*,
\end{equation} of the algebra of sections of the (simple part of)
complex Clifford bundle $\Cl(M)$, generated by $v\in TM$ with relation
$v^2 + g(v,v)=0$.  Furthermore $\D$ is the usual Dirac operator
\begin{equation} \D= \gamma\circ \tilde\nabla = \sum_j^n \gamma(e_j) \tilde\nabla_{e_j},
\end{equation} where $\tilde\nabla$ is a lift to $S$ of the
Levi-Civita connection on $M$ and $e_j$, $j=1,\dots ,n$, is a local
oriented orthonormal basis of $TM$.

One has
$$[\D,f]= \gamma(\mathrm{d}f)
\, ,\quad f\in C^\infty(M),$$ or, what is the same, the symbol of $\D$
is
$$
\sigma_\D(\xi) = -i\gamma(\xi), \quad \xi\in T^*M\,,
$$
where we have identified $TM\approx T^*M$, and so $\nabla f$ with
$df$, using the metric $g$.  Note that the operators of that form
together with functions generate an isomorphic copy of the Clifford
algebra $\Gamma(\Cl(M))$.

If $\operatorname{dim} M$ is even there is also a $\Z_2$-grading
$\chi_S$ of $L^2(S)$, with which all $a\in A$ commute and $\D$
anticommutes.  It should be mentioned that $\D$ is an elliptic
operator and its index, or more precisely the Fredholm index of
$\D|_{\Gamma(S)}:\Gamma(S)\to \Gamma(S)$, where $\Gamma(S^{\pm})$ are
$\pm 1$ eigenspaces of $\chi_S$, plays an important role in geometry
and applications to physics. It can be expressed in terms of the
characteristic class called $\hat{A}$-genus, a topological invariant
of~$M$.

Furthermore there is a real structure (known as {\em charge
conjugation}) $J_S$ on $L^2(S)$, that satisfies the order 0 condition
\eqref{eq:0order} and the order 1 condition \eqref{eq:1order}, but
\underline{not} the order 2 condition \eqref{eq:2order}.  Indeed,
$J_S$ and \eqref{can} obey a stronger version of \eqref{eq:0order} and
\eqref{eq:1order} which excludes \eqref{eq:2order}. Namely the norm
closure of $C^\infty(M)$, that is the algebra $C(M)$ of continuous
functions on $M$, is the maximal commutant in $B(L^2(S))$ of the norm
closure of $\Cl_{\D}(C^\infty(M))=C^\infty(M)[\D, C^\infty(M)]$, which
is just the algebra of continuous sections $\Gamma(\Cl(M))$ of the
(complexified) Clifford bundle $\Cl(M)$ on $M$ in the Dirac
representation.  We can thus say that ``the Dirac spinor fields
provide a Morita equivalence $C(M)-\Gamma(\Cl(M))$ bimodule''.

As a matter of fact for the latter property it suffices that $M$ is
spin$_c$, that can be defined by any of the following three equivalent
statements:
\begin{enumerate}[label=\roman*)]
\item there exist a principal $\mathit{Spin}_c(n)$-bundle, such
that the vector bundle associated with the representation
$\rho\times\!\Box$ is isomorphic to the tangent bundle $T(M)$;
\item $SO(n)$-bundle of oriented orthogonal frames lifts to
$\mathit{Spin}_c(n)$;
\item the second Stiefel-Whitney class $w_2(M)$ is a modulo 2
reduction of a class in $H^2(M,\Z)$.
\end{enumerate} 
Here $\mathit{Spin}_c(n)$ is the quotient group
of $\mathit{Spin}(n)\times U(1)$ by the subgroup\linebreak
$\Z_2^{\text{diag}}= \{(1,1),(-1,-1)\}$, $\rho:
\mathit{Spin}(n)\to SO(n)$ is the nontrivial double covering
and\, $\Box:U(1)\to U(1)$ is the square map.

Importantly however the property that an oriented Riemannian manifold
$M$ is spin$_c$ is actually tantamount \cite{Ply86} to
\begin{equation}\label{spinc} \boxed{\exists \text{ a Morita
equivalence } {\Cl(M)\!-\!C(M)}~ \text{ bimodule }~ \Sigma .}
\end{equation} Indeed when \eqref{spinc} holds then automatically
$\Sigma\approx\Gamma(S)$, where $S$ is the $\CC$-vector bundle of
Dirac spinors on $M$.

Therefore spin$_c$ manifolds lend itself to noncommutative
generalization via the algebraic property \eqref{spinc} by taking
advantage of the definition of Clifford algebra $\Cl_D(A)$.  Next, the
algebraic characterization of {\em spin} manifolds also admits a
noncommutative generalization as the condition \eqref{spinc} plus a
real structure (charge conjugation) $J$ that implements it.

We remark that the canonical S.T. fully encodes the geometric data on
$M$, that can be indeed reconstructed \cite{Con08} from a commutative
S.T. with certain few additional properties.  One of these properties
requires that KO-dimension, defined by \eqref{j2}, \eqref{jd},
\eqref{jg}, is equal for the operators $\D, \chi_S, J_S$ to the
dimension of $M$ modulo 8.

\subsection{Hodge-de\,Rham spectral triple}\label{hdrst} The canonical
S.T. \eqref{can} is not the only natural S.T.  On any oriented closed
Riemannian manifold $M$ there is also
\begin{equation}\label{eq:form} (C^\infty(M), L^2(\Omega(M)),d+d^* ),
\end{equation} where $\Omega(M)$ is the graded space of complex
de\,Rham differential forms on $M$, $d$ is the exterior differential
and $d^*$ is its adjoint.

The operator $d+d^*$ is actually Dirac-type since
\begin{equation}\label{la} d+d^*= \lambda\circ \nabla,
\end{equation} where
\begin{equation}\label{lambda} \lambda: \Gamma(\Cl(M))\to
\operatorname{End}_{C^\infty(M)} \Omega(M),\,\, \lambda(v)=v\wedge -
v\lrcorner\,,\,\, v\in T^*M\approx TM,
\end{equation} is the (reducible) faithful complex representation on
$\Omega(M)$ of the algebra of sections $\Gamma(\Cl(M))$ of the
(complexified) Clifford bundle $\Cl(M)$ over $M$.  The formula
\eqref{la} means that
\begin{equation}\label{lamb} [d+d^*,f]=\lambda(\mathrm{d}f) \,
,\quad f\in C^\infty(M),
\end{equation} what is also the same as the symbol of $d+d^*$ being
$$
\sigma_{d+d^*}(\xi) = -i\lambda(\xi), \quad \xi\in T^*M.
$$
Note that as for the canonical $\D$ the operators of the form
\eqref{la} together with functions generate $\Gamma(\Cl(M))$, and thus
indeed $\Cl_{d+d^*}(C^\infty(M))\approx \Gamma(\Cl(M))$.

The representation $\lambda$ is equivalent to the left regular
self-representation of $\Gamma(\Cl(M))$, via the isomorphism of vector
bundles $\Cl(M)\approx \Omega(M)$.  There is also an
anti-representation
\begin{equation}\label{rho} \rho:\Gamma(\Cl(M))\to
\operatorname{End}_{C^\infty(M)} \Omega(M),\,\, \lambda(v)=(v\wedge +
v\lrcorner)\,\chi_\Omega\,,\,\, v\in T^*M\approx TM,
\end{equation} where
\begin{equation}\label{chi} \chi_\Omega = \pm 1
\end{equation} on even forms $\Omega(M)^{\text{even}}$, respectively
odd forms $\Omega(M)^{\text{odd}}$.  It is equivalent to the right
regular self-antirepresentation of $\Gamma(\Cl(M))$.

Furthermore, since the endomorphisms $\lambda(v)$ and $\rho(v')$ commute, $\Omega(M)$ is a
$\Gamma(\Cl(M))$-$\Gamma(\Cl(M))$ bimodule, which is
equivalent to $\Gamma(\Cl(M))$.  Thus, $\Omega(M)$ is actually a
self-Morita equivalence $\Gamma(\Cl(M))$-$\Gamma(\Cl(M))$ bimodule,
the property which in fact provides its unambiguous characterization up to
a tensor product with sections of a complex line bundle.

The operator $\chi_\Omega$ \eqref{chi} always defines a $\Z_2$-grading
of $L^2(\Omega(M))$ according to the parity of forms. If dim$M=n=2m$
(even) there is also another grading $\chi_\Omega'$ given by the
normalized Hodge star operator, defined in terms of a local
orthonormal oriented basis $e^j$, $j=1, \dots, n$, $n=2m$ of $T^*M$ by
\begin{equation} \chi_\Omega' (e^{j_1}\wedge\dots \wedge e^{j_k}) =
i^{k(k-1)+m}e^{j_{k+1}}\wedge\dots \wedge e^{j_n}, \quad 0\leq k\leq
n,
\end{equation} where $j_1, \dots j_n$ is an even permutation of $1,
\dots n$.

Both the gradings $\chi_\Omega$ and $\chi_\Omega'$ commute with $a\in
C^\infty(M)$ and anticommute with $d+d^*$.  As well known, they play
important role for the index of the elliptic operator $d+d^*$.  More
precisely the Fredholm index of
$$(d+d^*)|_{\Omega(M)^{\text{even}}}:\Omega(M)^{\text{even}}\to \Omega(M)^{\text{odd}}$$
computes the Euler character of $M$, while the index of
$$(d+d^*)|_{\Omega(M)^{s}}: \Omega(M)^{s}\to \Omega(M)^{a},$$
where $\Omega(M)^{s}$ and $\Omega(M)^{a}$ are respectively the $\pm 1$
eigenspaces of $\chi_\Omega'$, computes the signature of $M$.

Furthermore there is also a real structure $J_\Omega$ on
$L^2(\Omega(M))$ given just by the complex conjugation of forms.  It
satisfies the conditions \eqref{eq:0order} and \eqref{eq:1order} but
definitely \underline{not} \eqref{eq:2order} and therefore can not
implement the $\Gamma(\Cl(M))$-$\Gamma(\Cl(M))$ self-Morita
equivalence as above.

In order to implement this equivalence, we need another real structure
$J_\Omega'$ on $\Omega(M)$, that interchanges the actions $\lambda$
and $\rho$. It turns out that it can be defined as
\begin{equation} J_\Omega'(e^{j_1}\wedge\dots \wedge e^{j_k}) =
e^{j_{k}}\wedge\dots \wedge e{j_1}, \quad 0\leq k\leq n,
\end{equation} which corresponds to the main anti-involution on
$\Gamma(\Cl(M))$ and can be simply written on $\Omega^k(M)$ as
\begin{equation} J_\Omega' = (-)^{k(k-1)/2}\circ c.c.
\end{equation} This real structure satisfies all the order conditions
\eqref{eq:0order}, \eqref{eq:1order} and \eqref{eq:2order} and does
implement the $\Gamma(\Cl(M))$-$\Gamma(\Cl(M))$ self-Morita
equivalence as above.

We mention that for the operators $d+d^*, \chi_\Omega, J_\Omega$ one
gets the signs $\epsilon=1, \epsilon'=1, \epsilon''=1 $ and so the
KO-dimension is 0.  Instead for the operators $d+d^*, \chi_\Omega',
J_\Omega$ the signs are $\epsilon=1, \epsilon'=1, \epsilon''=(-1)^m $
and so the KO-dimension is 0 if n=0 mod 4, and 6 if n=2 mod 4
[A. Rubin, MSc Thesis].

As far as the operators $d+d^*, \chi_\Omega, J_\Omega'$ are concerned
we get the signs $\epsilon=1, \epsilon'=1, \epsilon''=1 $ and so the
KO-dimension is 0.  Instead for the operators $d+d^*, \chi_\Omega',
J_\Omega'$ we get the signs $\epsilon=1$ and $\epsilon'=1$, while on
$\Omega^k(M)$ we obtain
\begin{equation} J_\Omega'\chi_\Omega' = (-1)^k \chi_\Omega'J_\Omega'
\end{equation} in which the sign appears that depends on $k$. Thus in
that case there is no overall sign $\epsilon''$ but in fact a grading
given by $\chi_\Omega$. This feature generalizes somewhat the notion
of KO-dimension which as known was tailored for the canonical spectral
triple.

Closing this section we remark that it is not clear whether, and with
which additional conditions, this S.T. equiped with any combination of
the gradings and real structures as above may faithfully encode the
geometric data on $M$, that can be then reconstructed.

\section{Noncommutative formulation of the Standard Model:
$\nu$S.M.}\label{nsm}

Concerning the underlying arena of $\nu$S.M., see e.g. \cite{CM08}, it
is
\begin{center} (ordinary (spin) manifold $M$) $\times$ (finite quantum
space $F$),\\
\end{center} described by the algebra $C^\infty(M)\otimes A_F\approx
C^\infty(M, A_F)$, where
$$\boxed{
A_F=\CC\oplus\Q\oplus M_3(\CC).}
$$
Here $\Q$ is the (real) algebra of matrices of the form
$$
\begin{bmatrix} \alpha & \beta \\ -\bar\beta &
\;\bar\alpha \end{bmatrix} \;,\qquad\alpha,\beta\in\CC,
$$
which is isomorphic to the algebra of quaternions.

The Hilbert space is
$$L^2(S)
\otimes\, H_F,$$ where
$$\boxed{
H_F= \CC^{96} =: H_f\otimes \CC^{3},}
$$
with $\CC^{3}$ corresponding to 3 generations, and
$$H_f= \CC^{32}\simeq M_{8\times 4}(\CC).$$
The orthonormal basis of $H_f$ is labelled by particles and
antiparticles, that we arrange as a $8\times 4$ matrix
$$
\begin{bmatrix} \nu_R & u^1_R & u^2_R & u^3_R \\ e_R & d^1_R & d^2_R &
d^3_R \\ \nu_L & u^1_L & u^2_L & u^3_L \\ e_L & d^1_L & d^2_L &
d^3_L\\ \hspace{1pt} \bar\nu_R & \bar e_R & \bar\nu_L & \bar e_L \\
\bar u^1_R & \bar{d}^{\,1}_R & \bar u^1_L & \bar{d}^{\,1}_L \\ \bar
u^2_R & \bar{d}^{\,2}_R & \bar u^2_L & \bar{d}^{\,2}_L \\ \bar u^3_R &
\bar{d}^{\,3}_R & \bar u^3_L & \bar{d}^{\,3}_L
\end{bmatrix}\, ,
$$
where the indices 1,2,3 correspond to the color quantum number.

The representation $\pi_F$ of $A_F$ is diagonal in generations and
$\pi_F(\lambda, q ,m)$ is given on $H_f$ i by left multiplication by
the matrix:
\begin{equation}\label{eq:8t8}
\begin{bmatrix} \left[\!
\begin{array}{c|c}
\begin{matrix} \;\lambda\; & \;0\; \\ 0 & \bar\lambda \end{matrix} &
\begin{matrix} \;0\; & \;0\; \\ 0 & 0 \end{matrix} \\ \hline
\begin{matrix} \;0\; & \;0\; \\ 0 & 0 \end{matrix} & q
\end{array} \!\right] \!\! & {0_4} \\ 0_4 & \!\!  \left[\!
\begin{array}{c|ccc} \lambda & \;0\; & \;0\; & \;0\; \\ \hline
\begin{matrix} \;0\; \\ 0 \\ 0 \end{matrix} && m
\end{array} \!\right]
\end{bmatrix}.
\end{equation} Note that $\pi_F(A_F)$ is a real $*$-algebra of
operators, and to get its complexification $\mathbb{A}_{F}$ just
replace $\bar\lambda$ by an independent $\lambda'\in \CC$, and take
$q\in M_2(\CC)$.

The grading (the chirality operator) is
$$\chi_M\otimes\chi_F,$$
where $\chi_F$ is diagonal in generations and on $H_f$ reads:
\begin{equation}\label{eq:Sgamma} \chi_F=
\begin{bmatrix} 1_2 \\ & \!\!-1_2 \\ && 0_4 \end{bmatrix} \otimes 1_4
+
\begin{bmatrix} 0_4 \\ & \!\!-1_4 \end{bmatrix} \otimes
\begin{bmatrix} 1_2 \\ & \!\!-1_2 \end{bmatrix} \;.
\end{equation}

The real conjugation is
$$J_M\otimes J_F,$$
where $J_F$ is also diagonal in generations and on $H_f$ reads:
\begin{equation}\label{eq:JF} J_F\begin{bmatrix} v_1 \\
v_2 \end{bmatrix}=\begin{bmatrix} v_2^* \\ v_1^* \end{bmatrix} \; ,
\end{equation} that satisfies $J_F^2=1$, the order 0 condition:
$$
[a,J_FbJ_F^{-1}]=0 \qquad\forall\;a,b\in A_F,
$$
and the order 1 condition:
\begin{equation}\label{eq:1storder} [[D,a],J_FbJ_F^{-1}]=0
\qquad\forall\;a,b\in A_F
\end{equation} (as in the classical case).

Finally, the Dirac operator is
$$D=\D_M\otimes\id +\chi_M\otimes D_F,$$
where $D_F$ employed by Chamseddine-Connes' reads on $H_F$
{\small\begin{multline}\label{DF} \hspace*{-5pt} D_F=
\text{\footnotesize$\left[\begin{array}{llrr|lccc} \zero & \zero &
\hspace*{-20pt}\!\!\!\!\!\!\Upsilon^*_\nu & \zero & \!\!\Upsilon^*_R &
{\zero} & {\zero}\\ \zero & \zero & \zero & \!\!\!\!\!\!\Upsilon^*_e &
\zero & {\zero} & {\zero} \\ \!\!\!\!\Upsilon_\nu\!\! & \zero & \zero
& \zero & \zero & \zero & \zero \\ \zero &
\hspace*{-5pt}\!\!\!\!\Upsilon_e & \zero & \zero & \zero & \zero &
\zero \\ \hline \!\!\!\!\Upsilon_R & \zero & \zero & \zero \!& \zero &
\!{\zero} & \zero \\ {\zero} & {\zero} & \zero & \zero &\! {\zero} &
\zero & \zero\\ {\zero} & {\zero} & \zero & \zero \!& \zero & \zero &
\zero\\ {\zero} & {\zero} & \zero & \zero \!& \zero & \zero & \zero
\end{array}\right]$}\otimes\;e_{11} +
\text{\footnotesize$\left[\begin{array}{llrr|cccc} \zero & \zero &
\hspace*{-20pt}\!\!\!\!\!\!\Upsilon^*_u & \zero & \zero & {\zero} &
{\zero} & {\zero} \\ \zero & \zero & \zero & \!\!\!\!\!\!\Upsilon^*_d
& \zero & {\zero} & {\zero} & {\zero} \\ \!\!\!\!\Upsilon_u\!\! &
\zero & \zero & \zero & \zero & \zero & \zero & \zero \\ \zero &
\!\!\!\!\Upsilon_d & \zero & \zero & \zero & \zero & \zero & \zero \\
\hline \zero & \zero & \zero & \zero \!& \zero & \!{\zero} & \zero &
\zero \\ {\zero} & {\zero} & \zero & \zero &\! {\zero} & \zero & \zero
& \zero \\ {\zero} & {\zero} & \zero & \zero \!& \zero & \zero & \zero
& \zero \\ {\zero} & {\zero} & \zero & \zero \!& \zero & \zero & \zero
& \zero
\end{array}\right]$}\otimes\;e_{11}^\perp\\ + e_{55}\otimes\;\left[\!
\text{\footnotesize$
\begin{array}{c|c}
\begin{matrix} 0\! & 0 \\ 0\! & 0 \end{matrix} &
\begin{matrix} \Upsilon^*_\nu \! & 0 \\ 0\! &
\Upsilon^*_e \end{matrix} \\ \hline \smallskip
\begin{matrix} \Upsilon_\nu\! & 0 \\ 0\! & \Upsilon_e \end{matrix}
& \begin{matrix} 0\! & 0 \\ 0\! & 0 \end{matrix}
\end{array} $} \!\right] + (e_{66}+e_{77}+e_{88})\otimes\;\left[\!
\text{\footnotesize$
\begin{array}{c|c}
\begin{matrix} 0\! & 0 \\ 0\! & 0 \end{matrix} &
\begin{matrix} \Upsilon^*_u \! & 0 \\ 0\! & \Upsilon^*_d \end{matrix}
\\ \hline \smallskip
\begin{matrix} \Upsilon_u\! & 0 \\ 0\! & \Upsilon_d \end{matrix}
& \begin{matrix} 0\! & 0 \\ 0\! & 0 \end{matrix}
\end{array} $} \!\right].
\end{multline}} 
Here the first tensor factor acts by the left matrix
multiplication and the second one by the right matrix multiplication,
$e_{jk}$ are the usual matrix units, the empty spaces stand for 0, and
$\Upsilon$'s are in Mat$(N,\CC)$ with $N$ equal to the number of
generations ($N=3$ on the current experimental basis).

Concerning the data $D_F$, $\chi_F$ and $J_F$ given above the
KO-dimension comes as 6.
\subsection{Properties of $\nu$S.M.}\label{propsnsm}

With the ingredients as listed in the previous section one gets:
\begin{itemize}
\item the group ${\mathcal G}:=\{U=uJ_FuJ_F^{-1}\,|\, u\in A_F,
det\,U=1 \}$ turns out to be isomorphic (up to a finite center) with
the gauge group $U(1)\!\times\! SU(2)\!\times\! SU(3)$ of the
S.M. (also as functions on $M$);
\item all the fundamental fermions in $H$ have the correct
S.M. charges with respect to ${\mathcal G}$\: broken to
$U(1)_{em}\times SU(3)$;
\item the 1-forms $a[D_F,b],\, a,b\in C^\infty(M, A_F)$ yield the
gauge fields $A_\mu$, $W^\pm$, $Z$, $G_\mu$ of the S.M. (from the part
$D_M$ of $D$), plus the weak doublet complex scalar Higgs field (from
the part $D_F$ of $D$).
\end{itemize} The merits of the noncommutative formulation are:
\begin{itemize}
\item it treats discrete and continuous spaces (or variables) on the
same footing;
\item both the gauge and the Higgs field arise as parts of a
connection;
\item the appearance of solely fundamental representations of
${\mathcal G}$ in the S.M. gets an explanation as the fact that they
are the only irreducible representations of simple algebras;
\item there is an elegant spectral action \, Tr$f(D/\Lambda)$, that
reproduces the bosonic part of ${\mathcal L}_{S.M.}$ as the lowest
terms of asymptotic expansion in $\Lambda$, and the matter action
$<\phi, D\phi>$ for the (Wick-rotated) fermionic part;
\item it couples in a natural way to gravity on $M$;
\item is claimed \cite{CC12,CC12a} to predict new relations among the
parameters of S.M.
\end{itemize} Some of the shortcomings still present are as in the
usual S.M.:\\ the 3 generations (families) put by hand, several free
parameters, though most of them incorporated into a single geometric
quantity: $D_F$.  Others are the unimodularity condition to be posed
on the gauge group ${\mathcal G}$ and a special treatment needed for
the two kinds of fermion doublings due to the presence of chirality
$\pm 1$ and particles/antiparticles both in $L^2(S)$ and $H_F$.

\subsection{The geometric nature of $H_f$}\label{gnnsm} The above
``almost commutative'' geometry is described by a S.T.
\begin{center} $(C^\infty(M), L^2(S),\D )\times (A_F, H_f, D_F)$,
\end{center} that is mathematically a product of the ``external''
canonical S.T. on spin manifold $M$ with the ``internal'' {\em finite}
S.T.

Few quite natural questions are in order concerning the geometric
interpretation of the internal S.T. $(A_F, H_f, D_F)$:
\begin{enumerate}[label=]
\item Does it also correspond to a (noncommutative) spin manifold?
\item Are the elements of $H_f$ ``spinors'' in some sense?
\item In particular ``Dirac spinors''?
\item Or does it correspond rather to de-Rham forms?
\item Or else?
\end{enumerate} To answer these questions, motivated by the classical
case as in Sect.\,\ref{cst}, the following definition has been
formulated for a general unital S.T. :
\begin{Def}[\!\!\!\cite{dd16}]\label{df:propM} A real spectral triple
$(A,H,D,J)$ is called \emph{spin} (and elements of $H$ are \emph{quantum Dirac
spinors}) if $H$ is a Morita equivalence $\Cl_D(A)$-$JAJ^{-1}$
bimodule (i.e. after norm-completion the algebras $\Cl_D(A)$ and
$JAJ^{-1}$ are maximal one with respect to the other).
\end{Def}

Is then the internal S.T. of $\nu$S.M. {\em spin}, like the external
one that is given by the canonical S.T. on $M$?

Building on and extending the classifications of \cite{Kra97} and
\cite{PS96} the answer in \cite{dd16} is negative. In fact therein
after a systematic search  an element 
$$X = e_{55}\otimes (1-e_{11})$$
has been found, such that $X\in \Cl_D(A)'$ but $X\notin JAJ$. A
possible way to circumvent this ``no go'' has been suggested by
employing a different grading and adding two extra non-zero matrix
elements of $D_F$, the status of which however requires a further
scrutiny (since though desirable for the correct renormalized Higgs
mass, they would have unobserved couplings to fermions).

But then, without such additions, may be the internal S.T. of
$\nu$S.M. is rather an analogue of the other natural classical
spectral triple, namely de-Rham forms?

To answer this question we have to formulate also these notions
noncommutatively using the algebraic description of the Hodge-de\,Rham
spectral triple with the grading $\chi_\Omega$ and real structure
$J_\Omega'$ as in Section \ref{hdrst}.
\begin{Def}[cf.\cite{dds17}]
\label{df:propM1} A spectral triple $(A,H,D)$ is called
\underline{complex Hodge} (and vectors in $H$ \emph{complex quantum
de-Rham forms}) if $H$ is a Morita equivalence $\Cl_D(A)-\Cl_D(A)$
bimodule (i.e. after norm-completion these algebras are maximal one
with respect to the other).\\ A complex Hodge spectral triple
$(A,H,D)$ with real structure $J$ is called \underline{Hodge} if $J$
satisfies the order 2 condition and implements the right
$\Cl_D(A)$-action.
\end{Def}

The following theorem provides the answer in the case of one
generation and thus for $\Upsilon's\in\CC$ in \eqref{DF}.

\begin{thm}[\!\!\cite{dds17}]\label{rem:CC} For the internal spectral
triple of the $\nu$S.M. with one generation the Hodge property holds
whenever $\Upsilon_x\neq 0,\;\forall\;x\in\{\nu,e,u,d\}$ and
\vspace*{-3pt}
\[\label{pred} |\Upsilon_\nu|\neq|\Upsilon_u| \qquad\text{or}\qquad
|\Upsilon_e|\neq|\Upsilon_d| \;.
\]
\end{thm}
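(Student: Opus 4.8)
The plan is to compute explicitly the Clifford algebra $\Cl_{D_F}(A_F)$ inside $\mathcal B(H_f)$ and its commutant, and to check when that commutant coincides with the image of the right action $J_F A_F J_F^{-1}$, thereby establishing (or obstructing) the Morita/Hodge condition of Definition~\ref{df:propM1}. First I would fix the representation $\pi_F$ given by \eqref{eq:8t8} and the one-generation $D_F$ from \eqref{DF} (all $\Upsilon$'s scalars), and list a spanning set of $[D_F,\pi_F(a)]$ as $a$ ranges over $A_F = \CC\oplus\Q\oplus M_3(\CC)$. Because $D_F$ is off-diagonal in the left-right splitting and couples only the $\nu,e,u,d$ entries with weights $\Upsilon_\nu,\Upsilon_e,\Upsilon_u,\Upsilon_d$, each commutator $[D_F,\pi_F(\lambda,q,m)]$ produces a finite list of rank-one-type matrix blocks whose coefficients are linear in $\lambda-\bar\lambda$ and in the entries of $q$ (the $m$-part drops out, as is standard). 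Multiplying these among themselves and with $\pi_F(A_F)$ generates $\Cl_{D_F}(A_F)$; I expect it to be a direct sum of full matrix algebras on the particle sector, with the precise block structure dictated by which $\Upsilon$'s are equal in modulus — this is exactly where condition \eqref{pred} will enter.

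Next I would compute the commutant $\Cl_{D_F}(A_F)'$. Since $\Cl_{D_F}(A_F)$ acts only on the particle part of $H_f$ (the antiparticle rows are annihilated by $[D_F,\cdot]$ in the relevant sense, modulo the $J_F$-twist), its commutant splits as (commutant on the particle block) $\oplus$ (everything on the antiparticle block). Then I would compute $J_F\pi_F(A_F)J_F^{-1}$ explicitly from \eqref{eq:JF}: conjugating the block-diagonal $\pi_F$ by the flip $J_F$ sends left multiplication by $(\lambda,q,m)$ on the particle sector to right multiplication by the same data on the antiparticle sector and vice versa, so $J_F A_F J_F^{-1}$ is a block-diagonal algebra of the same size as $\pi_F(A_F)$ but ``mirrored''. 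The Hodge property demands $\Cl_{D_F}(A_F)' = J_F\,\Cl_{D_F}(A_F)\,J_F^{-1}$ together with the order-2 condition \eqref{eq:2order}; I would verify the order-2 condition directly (it should hold because $J_F$ intertwines the two off-diagonal structures), and then reduce ``complex Hodge'' to the bicommutant-type statement that $\Cl_{D_F}(A_F)$ and $J_F\Cl_{D_F}(A_F)J_F^{-1}$ are each other's commutants, i.e. $\Cl_{D_F}(A_F)$ is a sum of full matrix algebras filling up $H_f$ with the correct multiplicities.

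The decisive computation is the dimension/block count of $\Cl_{D_F}(A_F)$ on the particle sector. On $H_f \simeq M_{8\times 4}(\CC)$, writing the four ``flavour'' channels $\nu,e,u,d$, the commutators $[D_F,\pi_F(a)]$ provide operators mixing right-handed and left-handed components within each channel, and the $\chi_M$-twist in the external product is irrelevant here since we work internally. One finds that $\Cl_{D_F}(A_F)$ contains, for each channel $x$, a copy of $M_2(\CC)$ acting on the $\{x_R,x_L\}$-doublet, and the question is whether these copies are independent across channels or get identified. They get identified precisely when the corresponding $\Upsilon$'s have equal modulus: if $|\Upsilon_\nu| = |\Upsilon_u|$ and $|\Upsilon_e| = |\Upsilon_d|$ simultaneously, a nontrivial element commuting with all of $\Cl_{D_F}(A_F)$ survives in the particle sector that is not in $J_F\Cl_{D_F}(A_F)J_F^{-1}$, breaking the Morita equivalence — mirroring the $X = e_{55}\otimes(1-e_{11})$ phenomenon of \cite{dd16}. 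Conversely, if \eqref{pred} holds the two flavour ``rescaling'' symmetries are broken, $\Cl_{D_F}(A_F)$ becomes the full algebra of block-diagonal matrices matching $J_F\Cl_{D_F}(A_F)J_F^{-1}$ as mutual commutants, and combined with the order-2 condition and the $J_F$-implementation of the right action (checked above) one gets the Hodge property.

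\textbf{Main obstacle.} The hard part will be the bookkeeping of the commutant: organizing the $8\times 4$ index set into the correct invariant blocks, keeping track of the quaternionic constraint on $q$ (which ties the $\nu$- and $e$-channels together on the left-handed side and forces the $\Upsilon_\nu$ versus $\Upsilon_e$ and $\Upsilon_u$ versus $\Upsilon_d$ comparisons to couple in the specific ``or'' pattern of \eqref{pred} rather than an ``and''), and ensuring no stray commuting operator hides in the antiparticle sector. Once the block decomposition of $\Cl_{D_F}(A_F)$ is pinned down, matching it against $J_F\Cl_{D_F}(A_F)J_F^{-1}$ and verifying maximality is a finite, if tedious, linear-algebra check; the order-2 condition and the right-action implementation follow from the explicit form of $J_F$ in \eqref{eq:JF} essentially by symmetry of $D_F$ under the particle/antiparticle flip.
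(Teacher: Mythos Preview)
Your direct-computation route is viable and is ultimately what underlies the paper's argument, but the paper organises it differently and more economically via an auxiliary algebra. It first \emph{guesses} the target $B=\CC\oplus M_3(\CC)\oplus M_4(\CC)\oplus M_4(\CC)$, represented so that the two $M_4(\CC)$ summands act on the particle rows over the lepton column and over the three quark columns respectively, while $\CC\oplus M_3(\CC)$ is simply the restriction of $\pi_F(A_F)$ to the antiparticle rows. It then proves a short sandwich lemma: if $\Cl_{D_F}(A_F)\subseteq B$ and $B'=J_FBJ_F^{-1}$, then the Hodge property is equivalent to $\Cl_{D_F}(A_F)'\subseteq J_FBJ_F^{-1}$, which in turn is equivalent to $\Cl_{D_F}(A_F)=B$. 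This reduces everything to two easy containments plus one commutant computation for the explicit $B$, bypassing any direct generation of $\Cl_{D_F}(A_F)$ from commutators; condition \eqref{pred} then enters only in the last containment.

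Two of your structural claims are wrong and would derail the direct computation if taken at face value. First, $\Cl_{D_F}(A_F)$ does \emph{not} decompose into a copy of $M_2(\CC)$ per flavour channel $x\in\{\nu,e,u,d\}$: the off-diagonal entries of $q$ already mix $\nu_L\leftrightarrow e_L$ and $u_L\leftrightarrow d_L$, and products of commutators then fill out a full $M_4(\CC)$ on each of the lepton and quark four-dimensional particle blocks. The genuine dichotomy is lepton-column versus quark-columns, not flavour by flavour; condition \eqref{pred} is precisely what prevents these two $M_4(\CC)$'s from collapsing to a single diagonally embedded copy. Second, on the antiparticle rows the commutant of $\Cl_{D_F}(A_F)$ is \emph{not} ``everything'': $A_F\subset\Cl_{D_F}(A_F)$ still acts there as left multiplication by $\mathrm{diag}(\lambda,m)$, so the commutant on that block is only the commutant of $\CC\oplus M_3(\CC)$. (Your opening paragraph also momentarily conflates $J_FA_FJ_F^{-1}$ with $J_F\Cl_{D_F}(A_F)J_F^{-1}$; only the latter matters for Hodge.) With these corrections your plan does go through, but you will have effectively rediscovered the paper's $B$; the sandwich lemma is what lets the paper avoid the explicit generation step altogether.
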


In the rest of this section we will sketch the steps of the proof.

First by direct computation we find that the commutant of $A_F$ in
$M_8(\CC)$ is the algebra $C_F$ with elements of the
form \begin{equation}\label{eq:CF} \left[\begin{array}{ccccc}
\multicolumn{1}{c|}{\rule{-2pt}{-10pt}q_{11}} &&&
\multicolumn{1}{|c|}{\!\rule{0pt}{12pt}q_{12}} \\[2pt]
\cline{1-2}\cline{4-4} &
\multicolumn{1}{|c|}{\rule{0pt}{12pt}\,\alpha\,} \\[2pt] \cline{2-3}
&& \multicolumn{1}{|c|}{\;\beta 1_2\rule[-10pt]{0pt}{25pt}} \\
\cline{1-1}\cline{3-4} \multicolumn{1}{c|}{\rule{0pt}{5pt}q_{21}} &&&
\multicolumn{1}{|c|}{\rule{0pt}{12pt}q_{22}} \\[5pt]
\cline{1-1}\cline{4-5}
\begin{matrix} ~ \\ ~ \\ ~ \end{matrix} &&&&
\multicolumn{1}{|c}{\quad\;\delta 1_3\quad\rule[-15pt]{0pt}{0pt}}
\end{array}\right] \;,
\end{equation} where $\alpha,\beta,\delta\in\CC$, $q=(q_{ij})\in
M_2(\CC)$. Consequently the commutant of $A_F$ in
$\operatorname{End}_{\CC}(H)$ is $A_F'=C_F\otimes M_4(\CC) \simeq
M_4(\CC)^{\oplus 3}\oplus M_8(\CC)$ of complex dimension~$112$.

Next $J_FA_FJ_F\subset\operatorname{End}_{\CC}(H_F)$ consists of
elements of the form:
$$
\begin{bmatrix} 1_4\! & 0_4 \\ 0_4\! & 0_4\end{bmatrix}\otimes
\left[\!
\begin{array}{c|ccc} \lambda & 0\! & 0\! & 0\! \\ \hline
\begin{matrix} 0\!\! \\ 0\!\! \\ 0\!\! \end{matrix} && m
\end{array} \!\right] +\begin{bmatrix} 0_4\! & 0_4 \\ 0_4\! &
1_4 \end{bmatrix} \otimes\left[\!
\begin{array}{c|c}
\begin{matrix} \lambda\! & 0 \\ 0\! & \bar\lambda \end{matrix} &
\begin{matrix} 0\! & 0 \\ 0\! & 0 \end{matrix} \\ \hline
\begin{matrix} 0\! & 0 \\ 0\! & 0 \end{matrix} & q
\end{array} \!\right],
$$
where the first factors of the tensor product acts by left matrix
multiplication and the second factor by the right matrix
multiplication.

Note that $A$ and $A_{\CC}$ have the same commutant in
$\operatorname{End}_{\CC}(H_F)$.  The map $a\mapsto J_F\bar aJ_F$
gives an isomorphism $A_F\to J_FA_FJ_F$ and of their
complexifications, and also the map $x\mapsto J_F\bar xJ_F$ is an
isomorphism between $A_F'$ and $(J_FA_FJ_F)'$.

Therefore the commutant $(J_FA_FJ_F)'$ of $J_FA_FJ_F$ has elements
\begin{equation}\label{eq:lemma2} a\otimes e_{11}+
\begin{bmatrix} b \\ & c
\end{bmatrix}\otimes e_{22} +\begin{bmatrix} b \\ & d
\end{bmatrix}\otimes (e_{33}+e_{44})
\end{equation} with $a\in M_8(\CC)$, $b,c,d\in M_4(\CC)$.

Furthermore $A_F'\cap (J_FA_FJ_F)'\simeq\CC^{\oplus 10}\oplus
M_2(\CC)$.  It follows that the complex dimension of
$A_F'+(J_FA_FJ_F)'$ is $2\cdot 112-14= 210$.  The (real) subspace of
hermitian matrices has real dimenson $210$.

Now we recall that any unital complex $*$-subalgebra of
$\operatorname{End}_{\CC}(H)$, where dim$H<\infty$, is a finite direct
sum of matrix algebras: $B\simeq\bigoplus_{i=1}^sM_{m_i}(\CC)$.  The
units ${P}_i$, $1\leq i \leq s$ of $M_{m_i}(\CC)$ are orthogonal
projections and $H$ decomposes as $H\simeq\bigoplus_{i=1}^sH_i$, with
\begin{equation}\label{eq:Hi} {H}_i = {P}_i\cdot H \simeq \CC^{m_i}
\otimes \CC^{k_i} \;,
\end{equation} where $k_i$ is multiplicity of the (unique) irreducible
representation $\CC^{m_i}$ of $M_{m_i}(\CC)$ in ${H}_i$, and
$M_{m_i}(\CC)$ acts on the 1st factor of $ \CC^{m_i} \otimes
\CC^{k_i}$ by matrix product.  Then one has the following lemma:
\begin{lemma}[A]\label{lemma:11} The commutant of $B$ in
$\operatorname{End}_{\CC}(H)$ is
$B'\simeq\bigoplus\nolimits_{i=1}^sM_{k_i}(\CC)$ and the action of
$B'$ on ${H}_i \simeq \CC^{m_i} \otimes \CC^{k_i}$ is given by matrix
multiplication by $M_{k_i}(\CC)$ of the second factor in the tensor
product.
\end{lemma}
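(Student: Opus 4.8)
The plan is to reduce the statement to the single-block case and there invoke Schur's lemma, which amounts to an elementary matrix-unit computation; the normal form $H\simeq\bigoplus_i(\CC^{m_i}\otimes\CC^{k_i})$ is the one recalled just before the lemma.

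First I would use that the $P_i$ are central projections in $B$ with $\sum_i P_i=1$. Hence any $T\in B'$ commutes with every $P_i$, so $T$ preserves each summand $H_i=P_i H$ and is therefore of the form $T=\bigoplus_{i=1}^s T_i$ with $T_i\in\operatorname{End}_\CC(H_i)$. Moreover, since $B\simeq\bigoplus_i M_{m_i}(\CC)$ acts on $H\simeq\bigoplus_i H_i$ block-diagonally, $T\in B'$ if and only if for each $i$ the component $T_i$ commutes with the image of $M_{m_i}(\CC)$ in $\operatorname{End}_\CC(H_i)$. Thus $B'=\bigoplus_{i=1}^s C_i$, where $C_i$ is the commutant of the $i$-th block acting on $H_i$, and it suffices to identify each $C_i$.

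Next I would compute the commutant of a single simple block. Fix $i$ and write $m=m_i$, $k=k_i$, so that $M_m(\CC)$ acts on $H_i\simeq\CC^m\otimes\CC^k$ by $a\mapsto a\otimes 1_k$, i.e. by matrix product on the first factor. Writing an arbitrary $T\in\operatorname{End}_\CC(\CC^m\otimes\CC^k)$ as $T=\sum_{p,q=1}^m E_{pq}\otimes T_{pq}$ with matrix units $E_{pq}\in M_m(\CC)$ and coefficients $T_{pq}\in M_k(\CC)$, the requirement $[\,T,\ E_{rs}\otimes 1_k\,]=0$ for all $r,s$, together with $E_{pq}E_{rs}=\delta_{qr}E_{ps}$, forces $T_{pq}=\delta_{pq}S$ for a single $S\in M_k(\CC)$; that is, $T=1_m\otimes S$. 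Hence $C_i=1_m\otimes M_k(\CC)\simeq M_k(\CC)$, acting by matrix multiplication on the second tensor factor of $H_i\simeq\CC^{m_i}\otimes\CC^{k_i}$. (Equivalently this is Schur's lemma: $\CC^m$ is the unique irreducible $M_m(\CC)$-module and $\operatorname{End}_{M_m(\CC)}(\CC^m)=\CC$, whence $\operatorname{End}_{M_m(\CC)}\big((\CC^m)^{\oplus k}\big)\simeq M_k(\CC)$.)

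Assembling the blocks gives $B'\simeq\bigoplus_{i=1}^s M_{k_i}(\CC)$ with the action as stated; as a consistency check $\dim_\CC B'=\sum_i k_i^2$ and the double commutant $B''=B$ reproduces the structure theorem quoted above. There is no genuine obstacle here: the only points needing a little care are that distinct summands of $B$ carry distinct central projections $P_i\in B$ (even when some $m_i$ happen to coincide), which is exactly what rules out intertwiners mixing different isotypic components and forces $T$ to be block-diagonal; and that the normal form $H_i\simeq\CC^{m_i}\otimes\CC^{k_i}$ used at the outset rests on the Wedderburn-type structure theorem for finite-dimensional complex $*$-algebras recalled immediately before the lemma.
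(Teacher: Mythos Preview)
Your argument is correct and is the standard proof via central projections and Schur's lemma. The paper does not actually supply a proof of Lemma~(A); it is stated as a known structural fact about finite-dimensional complex $*$-algebras, so there is no paper proof to compare against, but your approach is exactly the expected one.
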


We will also need:
\begin{lemma}[B]\label{lemma:12} Let $(A,H,D,J)$ be a
finite-dimensional real spectral triple. Assume that
$B\subseteq\operatorname{End}_{\CC}(H)$ is a unital complex
$*$-algebra satisfying:
$$
\Cl_D(A)\subseteq B \qquad\text{and}\qquad B'=JBJ^{-1} \;.
$$
The following are equivalent:
\begin{enumerate}[label=(\alph*)]
\item $\Cl_D(A)' = J\Cl_D(A)J^{-1}$ (the Hodge property)\\ \item
$\Cl_D(A)'\subseteq JBJ^{-1}$\\ \item $\Cl_D(A)=B$.
\end{enumerate}
\end{lemma}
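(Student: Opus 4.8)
The plan is to reduce the whole statement to elementary manipulations with commutants. Two ingredients are used. First, since $\dim_\CC H < \infty$, the unital complex $*$-subalgebras $\Cl_D(A)$ and $B$ of $\operatorname{End}_\CC(H)$ satisfy the bicommutant identities $\Cl_D(A)'' = \Cl_D(A)$ and $B'' = B$ — which in this finite-dimensional context follows from the structure description recalled just before Lemma A, by applying Lemma A twice. Second, conjugation by the invertible operator $J$ intertwines commutants, $(JSJ^{-1})' = JS'J^{-1}$ for every subset $S \subseteq \operatorname{End}_\CC(H)$ (the antilinearity of $J$ is irrelevant for this purely algebraic fact). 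Combined with the order-reversing behaviour of the commutant, the hypothesis $\Cl_D(A) \subseteq B$ then gives $B' \subseteq \Cl_D(A)'$, while the hypothesis $B' = JBJ^{-1}$ lets us freely rewrite $JBJ^{-1}$ as $B'$ throughout.

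I would then close the cycle $(c) \Rightarrow (a) \Rightarrow (b) \Rightarrow (c)$. For $(c) \Rightarrow (a)$: if $\Cl_D(A) = B$, then $\Cl_D(A)' = B' = JBJ^{-1} = J\Cl_D(A)J^{-1}$, which is exactly the Hodge property. For $(a) \Rightarrow (b)$: from $\Cl_D(A) \subseteq B$ one gets $J\Cl_D(A)J^{-1} \subseteq JBJ^{-1}$, and $(a)$ identifies the left-hand side with $\Cl_D(A)'$, giving $(b)$. The only substantive step is $(b) \Rightarrow (c)$: assuming $\Cl_D(A)' \subseteq JBJ^{-1} = B'$ and recalling $B' \subseteq \Cl_D(A)'$ from $\Cl_D(A) \subseteq B$, we obtain $\Cl_D(A)' = B'$; taking commutants once more and invoking the bicommutant theorem yields $\Cl_D(A) = \Cl_D(A)'' = B'' = B$. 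Equivalently, one observes that each of $(a)$, $(b)$, $(c)$ is in turn equivalent to the single identity $\Cl_D(A)' = B'$.

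The one place demanding a little care — and the step I would flag as the main, if modest, obstacle — is checking that the hypotheses genuinely place us in the setting of the bicommutant theorem and that all the $J$-conjugates appearing above are again $*$-algebras, so that their commutants behave as stated. Unitality of $\Cl_D(A)$ follows from $A$ being unital, and its $*$-closedness is built into its definition as the complex $*$-algebra generated by $A$ and $[D,A]$; that $JBJ^{-1}$ and $J\Cl_D(A)J^{-1}$ are $*$-closed is automatic once $B' = JBJ^{-1}$ is assumed, since commutants of $*$-closed sets are $*$-closed. Beyond this bookkeeping, the argument is a short chase through inclusions and commutants and requires no case analysis on the internal structure of the finite spectral triple.
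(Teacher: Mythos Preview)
Your proof is correct and follows essentially the same cycle $(a)\Rightarrow(b)\Rightarrow(c)\Rightarrow(a)$ as the paper, with the same arguments at each step; the only difference is that you make explicit the appeal to the bicommutant theorem in the step $(b)\Rightarrow(c)$, which the paper uses tacitly when passing from $\Cl_D(A)'\subseteq B'$ to $B\subseteq\Cl_D(A)$.
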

\begin{proof}[Proof of Lemma\,(B)]\leavevmode
\begin{list}{}{\leftmargin=1.0em \itemsep=3pt}
\item[(a$\Rightarrow$b)] The hypothesis $\Cl_D(A)\subseteq B$ implies
$J\Cl_D(A)J^{-1} \subseteq JBJ^{-1}$; and thus from (a) follows (b).

\item[(b$\Rightarrow$c)] $\Cl_D(A)'\subseteq JBJ^{-1}= B'$ implies $B
\subseteq\Cl_D(A)$, and so using our assumptions: $\Cl_D(A)=B$.
\item[(c$\Rightarrow$a)] If (c) holds then $B'=JBJ^{-1}$ translates to
$\Cl_D(A)'=J\Cl_D(A)J^{-1}\!.$
\end{list}
\end{proof}

Now, proceeding with the proof of Theorem, we take
\begin{equation}\label{eq:Bprime} B:=\CC\oplus M_3(\CC)\oplus
M_4(\CC)\oplus M_4(\CC)
\end{equation} \vspace{-5pt} with $(\lambda,m,a,b)\in B$ represented
on $H_F$ as
\begin{equation}\label{eq:ClDA} \left[\!
\begin{matrix} \;\lambda &0\; \\ 0 & m \end{matrix} \right] \otimes
e_{22}\otimes 1 +a\otimes e_{11}\otimes e_{11} +b\otimes e_{11}\otimes
(1-e_{11}) \;.
\end{equation} Next we:
\begin{itemize}
\item check that $\Cl_{D_F}(A_F)\subset B$;
\item check that $B$ and $J_FBJ_F^{-1}$ commute, and so
$J_FBJ_F^{-1}\subseteq B'$;
\item note that \eqref{eq:ClDA} is equivalent to the representation of
$B$ on:
$$
(\CC\otimes \CC^4) \oplus (\CC^3 \otimes \CC^4) \oplus (\CC^4 \otimes
\CC) \oplus (\CC^4 \otimes \CC^3)
$$
given by matrix multiplication on the first factors;
\item use Lemma~(A) to infer that
$$B'\simeq M_4(\CC) \oplus M_4(\CC) \oplus\CC \oplus M_3(\CC) \simeq B
$$
and so we have $J_FBJ_F^{-1}=B'$;
\item find that $\Cl_{D_F}(A_F)'\subseteq J_FBJ_F^{-1}$;
\item finish the proof by Lemma~(B).
\end{itemize}

\section{Conclusions}

The Connes-Chamseddine noncommutative formulation of the Standard
Mo\-del interprets the geometry of the S.M. as gravity on the product
of a spin manifold $M$ with a finite noncommutative `internal' space
$F$.  The multiplet of fundamental fermions (each one a Dirac spinor
on $M$) defines fields on $F$ that constitute $H_F$.

We show that the geometric nature of the latter one is not a
noncommutative analogue of Dirac spinors on $F$ (unless >2 new
parameters are introduced in the matrix $D_F$, so fields on $M$ with
physical status under scrutiny) but rather of de-Rham forms on $F$ if
the conditions \eqref{pred} are satisfied (for one generation).
Conversely, the geometric qualification of the internal spectral
triple as being Hodge constrains somewhat the parameters $\Upsilon$
occurring in the matrix $D_F$.

What happens for 3 generations of particles and so $96\times 96$
matrices?\\ It can be seen (not easily) that then the spin property
also does NOT hold, and that as adverted in \cite{BF14a}, indeed the
order 2 condition $\Cl_D(A)' \supset J\Cl_D(A)J$ holds.  Whether the
Hodge property is satisfied, or $H_F$ corresponds rather to three copies of
quantum de\,Rham forms on $F$, is currently under investigation.

\end{document}